\newcommand{\Tr}{\operatorname{Tr}}
\newcommand{\bR}{\mathbb{R}}
\newcommand{\bC}{\mathbb{C}}
\newcommand{\bP}{\mathbb{P}}
\newcommand{\bE}{\mathbb{E}}
\newcommand{\cI}{\mathcal{I}}
\newcommand{\cY}{\mathcal{Y}}
\newcommand{\cM}{\mathcal{M}}
\newcommand{\cP}{\mathcal{P}}
\newcommand{\cB}{\mathcal{B}}
\newcommand{\cO}{\mathcal{O}}
\newcommand{\hpi}{\hat{\pi}}
\newcommand{\ha}{\hat{a}}
\newcommand{\hm}{\hat{m}}
\newcommand{\hp}{\hat{p}}
\newcommand{\hb}{\hat{b}}
\newcommand{\dist}{\operatorname{dist}}
\newcommand{\rev}[1]{{\color{black}{#1}}}
\newcommand{\rebuttal}[1]{{\color{black}{#1}}}
\newtheorem{theorem}{Theorem}[section]
\numberwithin{equation}{section}
\providecommand{\keywords}[1]{\textbf{\textit{Keywords: }} #1}
\begin{document}

\title{Heterogeneous multireference alignment for images\\with application to 2-D classification\\in single particle reconstruction}

\author[1]{Chao Ma}
\author[1]{Tamir Bendory}
\author[2]{Nicolas Boumal}
\author[3]{Fred Sigworth}
\author[1,2]{Amit Singer}
\affil[1]{The Program in Applied and Computational  Mathematics, Princeton University, Princeton, NJ,  USA}
\affil[2]{Department of Mathematics, Princeton University, Princeton, NJ,  USA}
\affil[3] {Department of Cellular and Molecular Physiology, Yale University School of Medicine, New Haven, CT, USA}

\maketitle

\begin{abstract}
Motivated by the task of $2$-D classification in single particle reconstruction  by cryo-electron microscopy (cryo-EM), we consider the problem of heterogeneous multireference alignment of images. In this problem, the goal is to estimate a (typically small) set of target images from a (typically large) collection of observations. Each observation is a rotated, noisy version of one of the target images. For each individual observation, neither the rotation nor which target image has been rotated are known. As the noise level in cryo-EM data is high, clustering the observations and estimating individual rotations is challenging. We propose a framework to estimate the target images directly from the observations, completely bypassing the need to cluster or register the images. The framework consists of two steps. First, we estimate rotation-invariant features of the images, such as the bispectrum. These features can be estimated to any desired accuracy, at any noise level, provided sufficiently many  observations are collected. Then, we estimate the images from the invariant features. Numerical experiments on synthetic cryo-EM datasets demonstrate the effectiveness of the method. Ultimately, we outline future developments required to apply this method to experimental data.
\end{abstract}

%\begin{keywords}
%multireference alignment, cryo-EM, bispectrum, steerable PCA, single particle reconstruction
%\end{keywords}
\keywords{multireference alignment, cryo-EM, bispectrum, steerable PCA, single particle reconstruction}

%\IEEEpeerreviewmaketitle

\section{Introduction}

Single particle reconstruction using cryo-EM is a high-resolution imaging technique used in structural biology to image 3-D structures of macromolecules~\cite{frank2006three,kuhlbrandt2014resolution}. 
In a cryo-EM experiment, multiple samples of a particle are frozen in a thin layer of vitreous ice. 
Within the ice, the samples are randomly oriented and positioned. The electron microscope produces a tomographic image of the ice and the embedded samples, called a micrograph. The goal is then to estimate the 3-D structure of the particle from the micrograph. 
Importantly, the signal to noise ratio (SNR) of the  micrograph is usually low because of the limited electron dose that can be applied without causing excessive radiation damage. 

The first stage of existing cryo-EM algorithmic pipelines  is called \emph{particle picking}. In this stage, one aims to 
detect the projections of the samples within the micrograph and extract them. 
We  refer to these extracted images as \emph{projection images}. 
Throughout this paper, we assume perfect particle picking, that is, we obtain a large number of projection images, each containing a centered particle projection corresponding to an unknown viewing direction.

An important intermediate stage in the 3-D reconstruction procedure of cryo-EM is called \emph{2-D classification}. 
The goal of this stage is to produce $K$ 2-D images---called \emph{class averages}---with higher SNR. 
Each one of the $K$ images should represent a subset of the projections taken from a similar viewing direction.
The 2-D class averages can be used as templates for particle picking, to construct {\em ab initio} \mbox{3-D} structures~\cite{singer2011three, Christian2018Multicolor, van1987angular, goncharov1987three}, to provide a quick
assessment of the particles, \rev{to remove picked particles which are associated with non-informative
classes, and for symmetry detection~\cite{dube1993portal,singer2018mathematics,bendory2019single}.}

Different solutions were proposed for the 2-D classification problem. 
%\rev{For instance, the multivariate statistical analysis method used in~\cite{van1984multivariate,van1981use}. Another}
One approach is the reference-free alignment (RFA) technique~\cite{penczek1992three}. 
RFA tries to align all projection images globally by estimating all individual rotation parameters. However, when the images arise from many different viewing directions, RFA tends to produce large errors as no assignment of in-plane rotational angles can align all images simultaneously.
\rev{Methods based on expectation-maximization (EM)---an iterative algorithm that aims to find the marginalized maximum likelihood---are also popular}.  
In the context of cryo-EM, the method is usually referred to as Maximum Likelihood 2-D classification (ML2D). The method was first proposed in~\cite{sigworth1998maximum}, and is implemented in the popular software package RELION~\cite{scheres2012relion,scheres2005maximum}. 
Nevertheless, the EM framework lacks theoretical analysis and may be computationally expensive. \rev{In addition,  EM 
%takes only a  finite set of in-plane rotations into account **I'm not sure about these arguments, except I agree about computational complexity. And sampling at angles of $\pi L$ is Nyquist sampling.**, 
suffers from an intrinsic resolution-computational load trade-off, since the sampling of the in-plane rotation angles is Nyquist sampling.}
In Section~\ref{sec:num} we present some numerical results of EM and discuss more of its properties.

\rev{A different 2-D classification technique is based on multireference alignment (MRA)~\cite{van1996new,sorzano2010clustering,yang2012iterative}.} 
In MRA, the images are clustered into $K$ classes and the images within each class are averaged to suppress the noise. The averaged images are the class averages. 
As projection images can be similar up to rotation, the clustering is based \rev{ on either rotationally aligning the images within each class, or on features of the images which are invariant under rotations, such as  autocorrelation~\cite{schatz1990invariant} or bispectrum~\cite{zhao2014rotationally}.} 
MRA and invariant features play a key role in this paper and are discussed in detail later. Notably, our proposed method avoids the clustering stage, which may be inaccurate at low SNR.
%**I suggest deleting these references, or else cite some analysis of K-means in general**; see~\cite{bendory2018towards,bendory2019multi}. 
Instead, we aim to estimate the $K$ class averages directly  from the projection images, with no intermediate clustering stage. 

In this paper, we propose to model the 2-D classification problem  as an instance of the \emph{heterogeneous multireference alignment} (hMRA) problem, for the case of 2-D images~\cite{perry2019sample,boumal2018heterogeneous}. 
In the hMRA problem, we observe $N$ projection images. Each observed image 
is an in-plane rotated, noisy version of one of the $K$ underlying images---the class averages, corresponding to $K$ viewing directions. 
For each observation, the specific underlying image and the in-plane rotation are unknown.
Since in cryo-EM all in-plane rotations are equally likely to appear, we assume uniform distribution of  rotations. 
The goal is to estimate the $K$ class averages, as well as the distribution among the class averages. 
Crucially, for each observed particle, which class average it came from (its label) and which in-plane rotation was applied to it are treated as \emph{nuisance variables}; that is: they are unknown, but we do not seek to estimate them.
A detailed mathematical model of the hMRA problem is provided in Section~\ref{sec:stat}. 

In the high SNR regime, the nuisance variables could be estimated accurately, at least in principle. 
Given an accurate estimate of these variables, the problem becomes trivial: one can cluster the observations into the $K$ class averages,  undo the rotations, and average within each class to suppress the noise. However, in the low SNR regime, estimating the labels and rotations becomes challenging, and indeed impossible as the SNR drops to zero; see for instance~\cite{aguerrebere2016fundamental} for analysis in a related model. Notwithstanding, it was shown in a series of papers that in many MRA setups the underlying signal (or signals in our case) can be estimated at any noise level, provided sufficiently many observations are recorded~\cite{bandeira2014multireference,bendory2017bispectrum,bandeira2017optimal,
perry2019sample,abbe2017sample,abbe2017multireference,bandeira2017estimation}. Remarkably, it was shown that \rev{in the low SNR regime}, the method of moments  achieves the optimal sample complexity under rather moderate conditions~\cite{abbe2018estimation,bandeira2017optimal}.

Consequently, targeting the low SNR regime, we propose to apply the method of moments of MRA to the 2-D classification problem in cryo-EM.
Our work builds upon the notion of~\emph{bispectrum}, first proposed by Tukey~\cite{tukey1953spectral}, and currently used in signal processing~\cite{giannakis1989signal,sadler1992shift}. 
The bispectrum is invariant under  rotations; that is, the bispectrum of an image remains unchanged
after an arbitrary in-plane rotation. This property enables us to bypass  estimation of individual rotations associated with each one of the observations. Under the assumption of uniform distribution of rotations, the bispectrum is equivalent to the third-order moment of the image. 
Inspired by the seminal work of Kam~\cite{kam1980reconstruction},   
previous works~\cite{bendory2017bispectrum,boumal2018heterogeneous} studied the MRA and hMRA problems for 1-D signals using the bispectrum as a simplified model for the 3-D reconstruction problem in cryo-EM. In this paper, we study the more involved hMRA model for 2-D images as a model for 2-D classification.

In a nutshell, our proposed approach for 2-D classification consists of the following stages. 
First, we expand each image in a  \emph{steerable basis}. In this paper, we use the Fourier-Bessel basis,
but alternative bases, such as the prolate spheroidal wave functions, can be  used alternatively~\cite{landa2017steerable}.  
As explained in Section~\ref{sec:stat}, in such a basis all in-plane rotations of an image admit the same expansion coefficients, up to  complex phase modulations. This property is called \emph{steerability}.
As a result, specific monomials in these coefficients are invariant under in-plane rotations. We refer to these monomials as \emph{invariant features}. Specifically, we make use of monomials of the first-, second- and third-order called  the mean, power spectrum and bispectrum, respectively. 

In practice, instead of working directly on the expansion coefficients, we employ a dimensionality-reduction and denoising technique called \emph{steerable principal component analysis} (sPCA). This technique is similar to the standard PCA, while boosting the SNR by accounting for all in-plane rotations of the data in an efficient way~\cite{zhao2016fast}. In addition, the sPCA coefficients preserve the steerability property. Therefore, the invariant monomials can be computed in the sPCA space. 

After computing the invariant features of each image, we average over all images. These averages are consistent estimators (up to bias terms that can be removed easily) of the mixed invariant features of the $K$ class averages. Ultimately, a nonconvex least squares (LS) optimization problem is designed to recover the sPCA coefficients of the individual class averages from these mixed invariant features. 
All the ingredients of this algorithmic pipeline are provided in Sections~\ref{sec:stat} and~\ref{sec:alg}.
 Figure~\ref{fig:flowchart} illustrates the flowchart of the procedure. Numerical results and comparison with EM are provided in Section~\ref{sec:num}. Section~\ref{sec:con} concludes this work, discusses its limitations and potential future extensions.

\section{Statistical model and invariant features}\label{sec:stat}
In this section, we first describe in detail the hMRA model. Then, we introduce our framework based on computing features that are invariant under rotations in a steerable basis.

\subsection{Statistical model}\label{ssec:model}
Let $\{I_1, I_2, ..., I_K\}$ be a set of $K$ images of size $L\times L$, with $L$ odd\footnote{We consider an odd $L$  for convenience of implementation and presentation. Our algorithm also works when $L$ is even.} and pixel values in $[0,1]$: these are the class averages, our target parameters.
The pixels in an image are indexed by a pair of integers $(x,y)$  with $-(L-1)/2\leq x,y\leq(L-1)/2$. 
The support of  the images is assumed to lie in the disk $x^2+y^2\leq(L-1)^2/4$; as a result, any of their rotations have the same property. Let $R_\theta$ be a rotation operator which rotates an image  counter-clockwise by angle $\theta$, and let $\xi$ be a random variable following a uniform distribution on $[0,2\pi)$.
In addition, let $\pi$ be a random variable on the set $\{1,2,...,K\}$ with distribution $(\pi_1, \ldots, \pi_K)$:
\begin{equation*}
\pi_k:=\bP(\pi=k)> 0,\ \  k\in\{1,2,...,K\}.
\end{equation*}
Then, our observations are i.i.d.\ random samples from the model 
\begin{equation}\label{eq:model}
Y=R_{\xi}I_\pi+\varepsilon,
\end{equation}
where $\varepsilon=(\varepsilon_{ij})\in\bR^{L\times L}$ is a noise matrix of i.i.d.\ Gaussian variables with zero mean and variance $\sigma^2$; the random variables $\xi, \pi, \varepsilon$ are independent. Indeed, it was observed that the background noise in cryo-EM experiments can be treated as Gaussian~\cite{park2011stochastic}.

Suppose we collect $N$ independent observations from the generative model~\eqref{eq:model},
\begin{equation*}
\cY=\{Y_1, Y_2,..., Y_N\},
\end{equation*}
so that 
\begin{equation*}
Y_i = R_{\xi_i}I_{\pi_i}+\varepsilon_i, \quad i=1,\ldots,N.
\end{equation*}
From the observed data $\cY$, we seek to estimate the target images $\{I_1, I_2,..., I_K\}$ (class averages) and, possibly, the distribution $\pi$, without estimating the in-plane rotations of individual observations ${\xi_i}$ or the labels $\pi_i$.
The model~\eqref{eq:model} suffers from unavoidable ambiguities of rotations (of each class average) and permutation (across the $K$ images). Therefore, naturally, a solution is defined up to these symmetries. In Section~\ref{sec:num} we define a suitable error metric. % under these symmetries.

\subsection{Steerable basis}\label{ssec:inv}
As mentioned above, we aim to bypass estimating the nuisance variables by computing features that are invariant under rotation. To this end, we first expand the images with respect to a steerable basis. In polar coordinates, steerable basis functions take the form
\begin{equation}\label{eq:sepa}
u^{k,q}(r,\theta)=f^{k,q}(r)e^{\iota k\theta},
\end{equation}
where $\iota:=\sqrt{-1}$. Notice the separation of variables: If we expand an image in $u^{k,q}$, 
\begin{equation}
I(r,\theta)=\sum\limits_{k,q}a_{k,q}u^{k,q}(r,\theta),
\end{equation}
then the expansion of the rotated image follows from:
\begin{equation}\label{eq:rot}
\begin{split}
(R_\alpha I)(r, \theta) = I(r,\theta-\alpha)&=\sum\limits_{k,q}a_{k,q}u^{k,q}(r,\theta-\alpha)\\ &=\sum\limits_{k,q}a_{k,q}e^{-\iota k \alpha}u^{k,q}(r,\theta).
\end{split}
\end{equation}
Since our images are real, the coefficients satisfy a conjugacy symmetry: $a_{k,q}=\overline{a_{-k,q}}$. Therefore, coefficients with $k\geq0$ suffice to represent the images. 

Examples of steerable bases include the Fourier-Bessel basis and prolate spheroidal wavefunctions. See \cite{zhao2016fast}\cite{landa2017approximation}\cite{landa2017steerable}\cite{lederman2017numerical} for efficient expansion algorithms. In this paper, we work with the Fourier-Bessel basis on a disk with radius $c$ defined as:
\begin{equation}
u^{k,q}(r,\theta)=\left\{
\begin{array}{cc}
N_{k,q}J_k(R_k\frac{r}{c})e^{\iota k\theta}, & r\leq c,\\
0, & r>c,\\
\end{array}\right.
\end{equation}
where $J_k$ is the Bessel function of the first kind, $R_{k,q}$ is the $q^{th}$ root of $J_k$ and  
$N_{k,q}=(c\sqrt{
\pi}|J_{k+1}(R_{k,q})|)^{-1}$  is a normalization factor. We take $c$ to be $(L-1)/2$, in accordance with the assumed support of the images. 

To reduce the dimensionality of the representation and  denoise the image, we perform sPCA after expanding the images in a steerable basis~\cite{zhao2016fast}.
The sPCA results in a new, data driven basis to represent the images. Importantly,
this new basis preserves the steerability property and consequently the rotation property~\eqref{eq:rot} holds true. 
Section~\ref{ssec:sPCA}  introduces the sPCA technique in more details. 
With some abuse of notation, in what follows the coefficients in a sPCA basis are also denoted by $\{a_{k,q}\}$. 

\subsection{Invariant features}

The steerability property~\eqref{eq:rot}  enables us to determine features of images which are invariant under rotation.
Specifically, features that are invariant to an action of $SO(2)$: the special orthogonal group in 2-D.
 We assume that the images are ``band-limited'' in the sense that their expansion in a steerable basis is finite.

From~\eqref{eq:rot}, it is clear that coefficients $a_{k,q}$ corresponding to $k=0$ are not affected by rotation. Hence, the first-order invariants are just the  mean values, or the ``DC components'':
\begin{equation}\label{eq:mom1}
m_q=a_{0,q},
\end{equation}
for all $q$. The second-order invariants, which form the power spectrum, are given by
\begin{equation}\label{eq:mom2}
p_{k,q_1,q_2}=a_{k,q_1}\overline{a_{k,q_2}},
\end{equation}
for all $k,q_1,q_2$. 
 The power spectrum coefficients are invariant to rotation since for all $\alpha$:
\begin{eqnarray*}
	&&\left(a_{k,q_1}e^{-\iota k\alpha}\right)\left(\overline{a_{k,q_2}e^{-\iota k \alpha}}\right)= a_{k,q_1}\overline{a_{k,q_2}}.
\end{eqnarray*}
Unfortunately, the power spectrum does not determine the image uniquely: a multiplication of the expansion coefficients by 
\rev{$e^{\iota h[k]}$ for an arbitrary function $h$} does not change the power spectrum, yet it does change the image.

The third-order invariant, the bispectrum, is defined by 
\begin{equation}\label{eq:mom3}
b_{k_1,k_2,q_1,q_2,q_3}=a_{k_1,q_1}a_{k_2,q_2}\overline{a_{k_1+k_2,q_3}},
\end{equation}
for all $k_1,k_2,q_1,q_2,q_3$.
Using equation~\eqref{eq:rot}, one can verify that indeed:
\begin{eqnarray*}
&&\left(a_{k_1,q_1}e^{-\iota k_1\alpha}\right)\left(a_{k_2,q_2}e^{-\iota k_2\alpha}\right)\left(\overline{a_{k_1+k_2,q_3}e^{-\iota (k_1+k_2)\alpha}}\right) \\
&=&\left(a_{k_1,q_1}a_{k_2,q_2}\overline{a_{k_1+k_2,q_3}}\right)\left(e^{-\iota k_1\alpha}e^{-\iota k_2\alpha}\overline{e^{-\iota (k_1+k_2)\alpha}}\right)\\
&=&a_{k_1,q_1}a_{k_2,q_2}\overline{a_{k_1+k_2,q_3}}.
\end{eqnarray*}
The combined power spectrum and bispectrum do determine an image uniquely, up to global rotation: 
\begin{theorem}
Consider two images with steerable basis coefficients $a_{k,q}$ and $a'_{k,q}$, respectively, in the range $-k_{max}\leq k\leq k_{max}$. Assume that for all $-k_{max}\leq k\leq k_{max}$ there exists at least one $q$ such that $a_{k,q}\neq0$. 
If for all indices
\begin{eqnarray}
&a_{k,q_1}\overline{a_{k,q_2}}=a'_{k,q_1}\overline{a'_{k,q_2}}, \label{eq:thm2} \\
&a_{k_1,q_1}a_{k_2,q_2}\overline{a_{k_1+k_2,q_3}}=a'_{k_1,q_1}a'_{k_2,q_2}\overline{a'_{k_1+k_2,q_3}}, \label{eq:thm3} 
\end{eqnarray}
then there exists $\theta\in[0,2\pi)$ such that
\begin{equation}
a'_{k,q}=a_{k,q}e^{-\iota k\theta}.
\end{equation}
for all $k, q$.
That is, the two images only differ by a rotation.
\end{theorem}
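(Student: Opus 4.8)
The plan is to reduce the statement to a small functional equation for a phase function on the integers, in the spirit of classical bispectrum-inversion arguments. First I would extract amplitude and relative-phase information from the power spectrum. Setting $q_1=q_2=q$ in~\eqref{eq:thm2} gives $|a_{k,q}|=|a'_{k,q}|$ for all $k,q$; in particular $a_{k,q}=0$ exactly when $a'_{k,q}=0$. Next, fix $k$ in the range $-k_{max}\le k\le k_{max}$; by the nondegeneracy hypothesis there is an index $q^\ast$ with $a_{k,q^\ast}\neq 0$, so I can define a unit-modulus number $e^{\iota\psi_k}:=a'_{k,q^\ast}/a_{k,q^\ast}$. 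Using~\eqref{eq:thm2} with $q_1=q$, $q_2=q^\ast$ shows that $a'_{k,q}=e^{\iota\psi_k}\,a_{k,q}$ for every $q$ with $a_{k,q}\neq0$, and the relation holds trivially when $a_{k,q}=0$; thus $a'_{k,q}=e^{\iota\psi_k}a_{k,q}$ for all $q$, and $\psi_k$ is independent of the choice of $q^\ast$. The whole problem is now to show that $e^{\iota\psi_k}=e^{-\iota k\theta}$ for a single $\theta$.

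The second step is to feed this into the bispectrum. For any admissible triple of frequencies with $k_1$, $k_2$ and $k_1+k_2$ all in $[-k_{max},k_{max}]$, the nondegeneracy hypothesis lets me choose $q_1,q_2,q_3$ so that $a_{k_1,q_1}a_{k_2,q_2}\overline{a_{k_1+k_2,q_3}}\neq0$. Substituting $a'=e^{\iota\psi}a$ into~\eqref{eq:thm3} and cancelling this nonzero product yields $e^{\iota(\psi_{k_1}+\psi_{k_2}-\psi_{k_1+k_2})}=1$, i.e.\ $\psi$ is additive modulo $2\pi$ on the frequency range. Taking $k_1=k_2=0$ gives $e^{\iota\psi_0}=1$, and the conjugation symmetry $a_{-k,q}=\overline{a_{k,q}}$ (and likewise for $a'$) gives $e^{\iota\psi_{-k}}=\overline{e^{\iota\psi_k}}$.

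The third step is to solve this functional equation. A straightforward induction on $k\ge1$, using additivity with $k_1=k$ and $k_2=1$ (valid as long as $k+1\le k_{max}$), gives $e^{\iota\psi_k}=e^{\iota k\psi_1}$; the negative frequencies then follow from $e^{\iota\psi_{-k}}=\overline{e^{\iota\psi_k}}$, and $k=0$ is the already-noted $e^{\iota\psi_0}=1$. Choosing $\theta\in[0,2\pi)$ with $e^{\iota\theta}=e^{-\iota\psi_1}$ yields $e^{\iota\psi_k}=e^{-\iota k\theta}$, hence $a'_{k,q}=a_{k,q}e^{-\iota k\theta}$ for all $k,q$; by the rotation rule~\eqref{eq:rot} this says exactly that the second image is $R_\theta$ applied to the first.

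I expect the only real care needed is bookkeeping around vanishing coefficients: the hypothesis that for every $k$ in the range there is a $q$ with $a_{k,q}\neq0$ is used twice---once to define each $\psi_k$ unambiguously, and once to guarantee a nonzero bispectral monomial to cancel---and one should phrase the phase relations multiplicatively (in terms of $e^{\iota\psi_k}$) rather than additively, to avoid spurious $2\pi$ ambiguities when passing between $\psi_k$ and $k\psi_1$. Everything else is routine substitution.
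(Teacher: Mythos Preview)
Your proposal is correct and follows essentially the same route as the paper's proof: equal moduli from the power spectrum, a per-frequency phase factor from~\eqref{eq:thm2}, additivity of those phases from~\eqref{eq:thm3}, and then $\theta_k=k\theta$. If anything, your version is slightly more careful than the paper's, since you work multiplicatively in $e^{\iota\psi_k}$ and spell out the induction and the treatment of negative frequencies, whereas the paper writes the additive relation $\theta_{k_1}+\theta_{k_2}=\theta_{k_1+k_2}$ and jumps directly to $\theta_k=k\theta$ without mentioning the $2\pi$ issue.
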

\begin{proof}
Set $q_1=q_2$ in~\eqref{eq:thm2}, we have $|a_{k,q}|=|a'_{k,q}|$ for any $k$ and $q$. Hence, $a'_{k,q}\neq0$ if and only if $a_{k,q}\neq0$, and there exists $\theta_{k,q}\in[0,2\pi)$ such that $a'_{k,q}=a_{k,q}e^{-\iota\theta_{k,q}}$. Then, still by~\eqref{eq:thm2}, we have
\[
a'_{k,q_1}\overline{a'_{k,q_2}}=a_{k,q_1}\overline{a_{k,q_2}}e^{-\iota(\theta_{k,q_1}-\theta_{k,q_2})}.
\]
This means that, for fixed $k$, $\theta_{k,q}$ take a same value (in $[0,2\pi)$) for all $q$ satisfying $a_{k,q}\neq0$.
Hence, for each $k$, there exists a single $\theta_k\in[0,2\pi)$ such that
\[
a'_{k,q}=a_{k,q}e^{-\iota\theta_k}.
\]
Next, by~\eqref{eq:thm3}, we have for all $k_1$, $k_2$, $q_1$, $q_2$, and $q_3$, 
\begin{align*}
a'_{k_1,q_1}a'_{k_2,q_2}\overline{a'_{k_1+k_2,q_3}}&=a_{k_1,q_1}a_{k_2,q_2}\overline{a_{k_1+k_2,q_3}}\\ &\times e^{-\iota\left(\theta_{k_1}+\theta_{k_2}-\theta_{k_1+k_2}\right)}.\end{align*}
By assumption, we can always choose $q_1,q_2$ and $q_3$ such that
	 $a_{k_1,q_1}$, $a_{k_2,q_2}$ and $a_{k_1+k_2,q_3}$ are all nonzero. Then, we have
\[
\theta_{k_1}+\theta_{k_2}=\theta_{k_1+k_2},
\]
for all $-k_{max}\leq k_1,k_2\leq k_{max}$.
This in turn implies 
\[
\theta_k=k\theta,
\]
for some constant $\theta$. Thus we conclude that
\[
a'_{k,q}=a_{k,q}e^{-\iota k\theta}. \qedhere
\]
\end{proof}

\section{$2$-D classification using invariant features}\label{sec:alg}

In this section, we introduce the whole pipeline of our algorithm in detail. We start by expanding the observed images in Fourier-Bessel basis. We then perform sPCA on the resulting coefficients. Next, we estimate the mixed invariants of the $K$ true images (class averages) from the  observations' sPCA coefficients. Ultimately, we estimate the sPCA coefficients of the true images, and thus the images themselves, from the mixed invariants  via a nonconvex LS optimization problem.
Figure~\ref{fig:flowchart} shows a schematic flow chart of our algorithm and Algorithm~\ref{alg:1} describes our algorithm step by step. Next, we elaborate on each of the steps.

\begin{figure}[!t]
\centering
\includegraphics[width=3in]{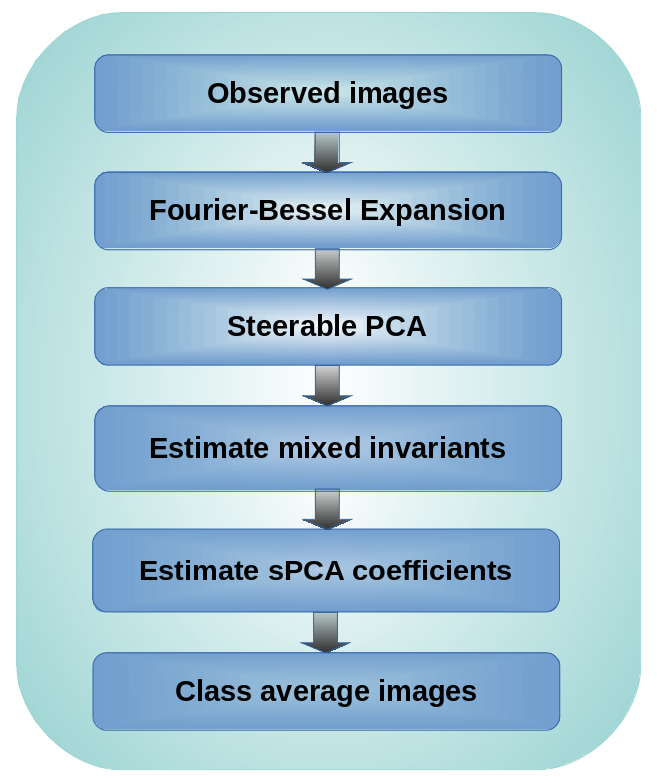}
\caption{Flow chart of Algorithm~\ref{alg:1} for $2$-D classification using rotationally invariant features.}
\label{fig:flowchart}
\end{figure}

\begin{algorithm2e}\caption{2-D classification by invariant features}\label{alg:1}
\textbf{Input: } Observations $Y_1, \ldots, Y_N$; noise variance $\sigma^2$.\\
Expand the observations in the Fourier-Bessel basis, and perform sPCA on the expansion coefficients using the method described in~\cite{zhao2016fast}. \\
Estimate the mixed invariants of the true images using the sPCA coefficients of the data by~\eqref{eq:est1}, \eqref{eq:est2} and~\eqref{eq:est3}. \\
Estimate the sPCA coefficients of the true images and the distribution $\pi$ by solving the optimization problem~\eqref{eq:obj}. \\
Recover the images $\hat{I}_1, \ldots, \hat{I}_K$ from the sPCA coefficients by~\eqref{eq:rec}.\\
\textbf{Output: } images $\hat{I}_1, \ldots, \hat{I}_K$ (up to permutation and rotations); distribution $\hat{\pi}$.
\end{algorithm2e}

\subsection{Fourier-Bessel sPCA}\label{ssec:sPCA}
We perform sPCA on the images after they were expanded in a Fourier-Bessel steerable basis, introduced in Section \ref{ssec:inv}. Like in a standard PCA, the first step is to subtract the mean observed image from each observation to center the data. The mean image is added back at the last step of the PCA. To ease exposition, we assume (only in this subsection) that the images have zero mean.

For a regular PCA, one would construct the data matrix $X$ such that each column holds the expansion coefficients of one image. Then, PCA would extract the dominant eigenvectors of \rev{$XX^*$, where $X^*$ is the conjugate transpose of $X$. Note that since $XX^*$ is Hermitian, it has real eigenvalues.
% Each eigenvector holds coefficients of an eigenimage in the Fourier-Bessel basis.
} 
Since in our model (and in cryo-EM) any observed images could have been observed after arbitrary in-plane rotation with the same probability, we wish to include all such rotated versions of all images in the PCA procedure. This can be done efficiently owing to steerability, as described in \cite{zhao2013fourier}. The dominant eigenimages obtained through sPCA form an orthonormal basis for a lower dimensional subspace, where we now project all observations. Crucially, this eigenbasis is also steerable (since each eigenimage is a linear combination of steerable basis functions.) We still get the two usual benefits of PCA---dimensionality reduction and denoising---with the added benefit that we exploited all of the available information. \rev{The sPCA has been proven to be an effective denoising tool for cryo-EM  reconstruction~\cite{zhao2014rotationally,bhamre2016denoising,levin20183d,landa2017steerable}.}
The  procedure is actually faster than standard PCA because the data covariance $XX^*$ is block-diagonal upon factoring in all rotated images. In the next subsection, we use the expansion coefficients in the sPCA basis to compute invariant features.

\subsection{Estimating the invariant features of the class average  images} 
After performing sPCA, we get a steerable eigenbasis (a collection of eigenimages) and the expansion coefficients of the observed images in that basis (after projection to the corresponding subspace). Then, we can compute the  invariant features using these coefficients. The  invariants can be computed according to equations~\eqref{eq:mom1}, \eqref{eq:mom2} and~\eqref{eq:mom3}. Next, we estimate the mixed invariants of the underlying class average images using the invariants of the noisy data, which we now explain.

Let $\{a^i_{k,q}\}$ be the sPCA coefficients of the $i$th target image $I_i$. 
As per our model~\eqref{eq:model}, in the absence of noise, the coefficients of an observation $Y$ are given by $a^\pi_{k,q}e^{-\iota k\xi}$. Let $b_{k_1,k_2,q_1,q_2,q_3}^{\pi,\xi}$ be the bispectrum computed from the latter. By construction, this is independent of $\xi$: this is simply the bispectrum of the target image $I_\pi$. Marginalizing over the remaining nuisance variable $\pi$, we find
\begin{eqnarray}
\bE_{\pi} b_{k_1,k_2,q_1,q_2,q_3}^{\pi,\xi} &=& \bE_{\pi} a^\pi_{k_1,q_1}a^\pi_{k_2,q_2}\overline{a^\pi_{k_1+k_2,q_3}} \label{eq:avg1},
\end{eqnarray}
where $\bE_{\pi}$ represents expectation taken against $\pi$: a sum over $\pi = 1, \ldots, K$ weighted by $(\pi_1, \ldots, \pi_K)$.

This relation implies that by  averaging over all the bispectra of the observations  we can estimate the mixed bispectra of the $K$ class averages. Estimation of mixed mean and power spectra can be similarly obtained.
Crucially, we approximate the mixed invariants of the true images without estimating the rotations $\xi$ or the labels $\pi$ of individual  observations.

The same method can be applied in the presence of noise.
Now, the coefficients of the observations are given by 
\begin{equation}
a^\pi_{k,q}e^{-\iota k\xi}+\varepsilon_{k,q}^ce^{-\iota k\xi},
\end{equation}
where $\varepsilon^c$ denotes the complex Gaussian noise in the coefficients, satisfying $\bE \varepsilon^c=0$ and $\bE \varepsilon^c(\varepsilon^c)^*=\sigma^2I$, \rev{where $I$ is the identity matrix.} %see~\eqref{eq:noise_sPCA}.
The noise terms induce bias in the power spectrum and bispectrum estimation. Particularly,  the  noisy power spectrum of $Y$  %$p_{k,q_1,q_2}^{\pi,\xi,\varepsilon}$, 
satisfies
\begin{eqnarray}\label{eq:powerspectrum_noise}
\bE_{\pi,\varepsilon} p_{k,q_1,q_2}^{\pi,\xi,\varepsilon}&=&\bE_{\pi} a_{k,q_1}^{\pi}\overline{a_{k,q_2}^\pi}+\sigma^2\delta_{q_1,q_2},
\end{eqnarray}
where $\delta_{q_1,q_2}$ is the Kronecker delta function. 
Hence, we get a bias term which depends solely on $\sigma$,   which is usually estimated in the cryo-EM algorithmic pipeline. 
Similarly, expectation over the noisy bispectrum results in  
\begin{equation} \label{eq:bispectrum_noise}
\begin{split}
\bE_{\pi,\varepsilon} b_{k_1,k_2,q_1,q_2,q_3}^{\pi,\xi,\varepsilon} &=\bE_\pi a^\pi_{k_1,q_1}a^\pi_{k_2,q_2}\overline{a^\pi_{k_1+k_2,q_3}} \\ &+\sigma^2\bE_\pi A^\pi,
\end{split}
\end{equation}
where 
\begin{equation*}
A^{\pi}:=\delta_{q_2,q_3}\delta_{k_1,0}a^\pi_{0,q_1}
+\delta_{q_1,q_3}\delta_{k_2,0}a^\pi_{0,q_2}  
+\delta_{q_1,q_2}\delta_{k_1+k_2,0}a^\pi_{0,q_3}.
\end{equation*}
Here, the bias term depends on both $\sigma^2$ and the coefficients $a_{0,q}^\pi$. Noise does not introduce bias in estimates of the mean.

Equipped with~\eqref{eq:powerspectrum_noise} and~\eqref{eq:bispectrum_noise}, estimating the mixed invariants can be executed by averaging  over the invariants of the observations and removing  the bias terms. 
Specifically, let $m^{Y_i}_q$, $p^{Y_i}_{k,q_1,q_2}$, $b^{Y_i}_{k_1,k_2,q_1,q_2,q_3}$ be,  respectively, the mean, power spectrum and bispectrum of the noisy observation $Y_i$. 
Then, our estimators of the mixed invariants are easily computed as:
\begin{align}
\hat{m}_{q}=&\frac{1}{N}\sum\limits_{i=1}^N m^{Y_i}_q, \label{eq:est1}\\
\hat{p}_{k,q_1,q_2}=&\frac{1}{N}\sum\limits_{i=1}^N p^{Y_i}_{k,q_1,q_2} - \sigma^2\delta_{q_1,q_2},\label{eq:est2}\\
\hat{b}_{k_1,k_2,q_1,q_2,q_3}=&\frac{1}{N}\sum\limits_{i=1}^N b^{Y_i}_{k_1,k_2,q_1,q_2,q_3}-\sigma^2 E_\pi A^\pi.\label{eq:est3}
\end{align}
In~\eqref{eq:est3}, the bias term $E_\pi A^\pi$ can be estimated by $\hat{m}_q$.

\subsection{Estimating the coefficients of the class averages}
In the last section, we showed how the  mixed invariants of the class averages can be estimated from the data. Now, we turn our attention to estimating the sPCA coefficients of the class average images from their mixed invariants by a LS optimization problem.

 Our optimization problem consists of two types of variables. The first type 
represents the sPCA coefficients of the target images:
\begin{equation}
\hat{a}^i=\{\hat{a}^i_{k,q}\},\ \ i=1,2,...,K.
\end{equation}
The second type represents the distribution from which the observations are sampled:
\begin{equation}
\hat{\pi}=(\hat{\pi}_1,\ldots\hat{\pi}_K).
\end{equation} 
 In practice,  as long as we have sufficiently many observations so that the empirical estimates of the invariant features are accurate, the following identities hold approximately:
\begin{align}
\sum\limits_{i=1}^K \pi_i a^i_{0,q} &\approx \hat{m}_{0,q}, \label{eq:approx1}\\
\sum\limits_{i=1}^K \pi_i a^i_{k,q_1}\overline{a^i_{k,q_2}} &\approx \hat{p}_{k,q_1,q_2}, \label{eq:approx2}\\
\sum\limits_{i=1}^K \pi_i a^i_{k_1,q_2}a^i_{k_2,q_2}\overline{a^i_{k_1+k_2,q_3}} &\approx \hat{b}_{k_1,k_2,q_1,q_2,q_3}.\label{eq:approx3}
\end{align}
 In an ideal case, we want $\hat{a}^i$ to be coefficients of $I_i$ (or an in-plane rotated version thereof), and $\hat{\pi}_i=\pi_i$.
Hence, we design an LS problem to minimize the difference between left- and right-hand sides of equations~\eqref{eq:approx1}, \eqref{eq:approx2} and~\eqref{eq:approx3}.
Let $M_q(\hat{a},\hat{\pi})$, $P_{k,q_1,q_2}(\hat{a},\hat{\pi})$, $B_{k_1,k_2,q_1,q_2,q_3}(\hat{a},\hat{\pi})$ capture the left-hand sides of the three equations above, respectively, with the estimators $(\hat{a},\hat{\pi})$ rather the unknown,  underlying parameters  $({a},{\pi})$.
Our objective function reads 
\begin{equation} \label{eq:obj}
\begin{split}
&F(\hat{a},\hat{\pi}) = \sum\limits_{q}\left|M_q(\hat{a},\hat{\pi})-\hat{m}_{0,q}\right|^2 \\
&+ \frac{1}{1+\sigma^2} \sum\limits_{k,q_1,q_2}\left|P_{k,q_1,q_2}(\hat{a},\hat{\pi})-\hat{p}_{k,q_1,q_2}\right|^2 \\
&+ \frac{1}{1+\sigma^2+\sigma^4} \sum_{\substack{k_1,k_2,\\q_1,q_2,q_3}}\left|B_{k_1,k_2,q_1,q_2,q_3}(\hat{a},\hat{\pi})-\hat{b}_{k_1,k_2,q_1,q_2,q_3}\right|^2.
\end{split}
\end{equation}
Here we use $1$, $1+\sigma^2$ and $1+\sigma^2+\sigma^4$ 
as  rough estimates for  the variances of corresponding terms; see also~\cite{boumal2018heterogeneous}.

Some constraints need to be imposed. First, as variables $\hat{\pi}$ are used to represent a distribution, they should lie on the simplex, that is, $\hat{\pi}_i \geq 0$ and 
$\hat\pi_1 + \cdots + \hat\pi_K = 1$.
Second, as our images are real, and Fourier-Bessel sPCA basis functions with $k=0$ are real, we have $\Im\left(a^i_{0,q}\right) = 0$ (where $\Im$ extracts imaginary part). Consequently, we can force $\hat{a}^i_{0,q}$ to be real. Similarly, coefficients with the same $q$ but opposite $k$ are conjugate,
\begin{equation}
\hat{a}^i_{k,q}=\overline{\hat{a}^i_{-k,q}}.
\end{equation}
In the optimization problem, we can just consider those coefficients with nonnegative $k$. 

To conclude, we aim to solve the following constrained LS optimization problem:
\begin{eqnarray} \label{eq:LS}
&\min\limits_{\hat{a}^i_{k,q}\in\bC,\ \hat{\pi}_i\in\bR} & F(\hat{a},\hat{\pi}), \label{eq:opt}\\
&\mbox{subject to}  & \sum\limits_{i=1}^K \hat{\pi}_i=1,\ \ \hat{\pi}_i\geq0, \nonumber\\
&      & \Im\left(\hat{a}^i_{0,q}\right)=0. \nonumber
\end{eqnarray}
While the LS is nonconvex, we find that we can solve it satisfactorily in practice---see Section~\ref{sec:num}.
This is in line with recent related work~\cite{bendory2017bispectrum,abbe2017multireference,boumal2018heterogeneous,bendory2019multi}.

We attempt to solve the optimization problem~\eqref{eq:opt} by a trust-regions algorithm or conjugate gradient method using Manopt \cite{boumal2014manopt}: a toolbox for optimization on manifolds\footnote{\url{www.manopt.org}}. In our problem, the variable $\ha$ lies in Euclidean space, while $\hpi$ lies on the simplex, whose relative interior is endowed with a Riemannian geometry in the toolbox~\cite{sun2015multinomial}. When using Manopt, we only provide the gradient on Euclidean space. The gradient on the manifold is computed from the gradient on Euclidean space together with the representation of the manifold. The Hessian is approximated automatically by finite differences of the gradient.

\subsection{Recovery of the images}
After solving the optimization problem, we obtain a collection of coefficients $\hat{a}^i_{k,q}$, which are believed to approximate the sPCA coefficients of the target images, $a^i_{k,q}$.
To recover the images themselves, up to rotation,  we simply compute a linear combination of the basis images given by the sPCA, using coefficients $\hat{a}^i$. 
We then add the mean image that was  subtracted from all observations during the sPCA preprocessing, denoted by $I_m^i$---see Section~\ref{ssec:sPCA}.
Specifically, letting $\Phi$ be the sPCA basis, we recover the images by
\begin{equation}\label{eq:rec}
\hat{I}_i = \Phi \hat{a}^i+I_m^i. 
\end{equation}

\subsection{Computational complexity}
In this section we discuss the computational complexity of each step of Algorithm~\ref{alg:1}. By~\cite{zhao2016fast}, the computational cost of the sPCA step is $\cO(NL^3+L^4)$, where $N$ is the number of observations and $L$ is the side length of the images. For each image, assume the sPCA provides $M$ components and the maximum angular frequency is $k_{max}$. 
Then, we obtain  $\cO(\frac{M^3}{k_{max}})$ invariants in total~\cite{zhao2014rotationally}. Hence, $\cO(\frac{NM^3}{k_{max}})$ computations are required to compute the invariants of all observations and estimate the mixed invariants for groundtruth images. 
Next, in the optimization part, computing the gradient requires going through all terms in the objective function, and each term contributes $\cO(K)$ elements of the gradient. Hence, if $T$ iterations are performed, the computational complexity of the optimization step is $\cO(\frac{TKM^3}{k_{max}})$. %\tamir{I didn't check this argument}.
 Finally, building the recovered images just involves $K$ linear combinations of the principal components with  computational cost $\cO(KML^2)$.

\section{Numerical experiments}\label{sec:num}

In this section, we show results of numerical experiments using our algorithm. First, we use random projections of the E.~coli $70$S ribosome volume~\cite{shaikh2008spider} as the groundtruth images to explore the performance of our algorithm under different noise levels and distributions. The volume is available in the software ASPIRE package\footnote{\url{www.spr.math.princeton.edu}}. The size of each image is $129^2$ \rev{(i.e., $L=129$)}. \rebuttal{Figure~\ref{fig:eg_E70s} shows some examples of the class averages and noisy input images.} Later, we apply our algorithm on two other molecules with projections of larger size. Code for our algorithm and all experiments is available at https://github.com/chaom1026/2DhMRA. The experiments presented below are conducted by MATLAB on a machine with $4\times$ E7-8880 v3 CPUs, and 750GB of RAM. 

\begin{figure}
\centering
\includegraphics[width=3.0in]{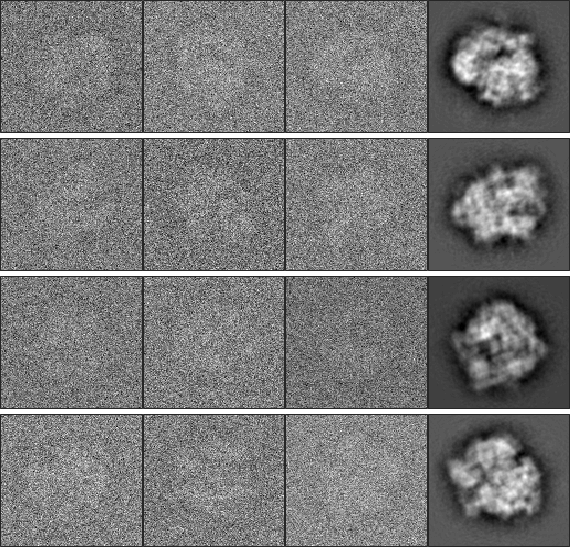}
\caption{\rebuttal{Examples of class averages (the \rev{right} column) and rotated noisy images (the first three columns) for the E.~coli $70$S ribosome volume. SNR=$1/50$.}}
\label{fig:eg_E70s}
\end{figure}

Following~\cite{boumal2018heterogeneous}, we define error metrics suitable for the inherent symmetries of our problem. For two images $I$ and $\hat{I}$, we define a rotationally invariant distance as 
\begin{equation}
\dist(I,\hat{I})=\min\limits_{\theta\in[0,2\pi]}\|R_\theta I-\hat{I}\|_{\text{F}}.
\end{equation}
This distance measures the Frobenius  norm between all rotational alignments of $I$ and $\hat{I}$. 
Let $\cI=\{I_1, ..., I_K\}$ be a set of underlying images, and let $\hat{\cI}=\{\hat{I_1}, ..., \hat{I}_K\}$ be the estimates. 
To be  invariant under both rotations and permutations, we use the following definition:
\begin{equation}\label{eq:disperm}
\dist(\cI,\hat{\cI})^2=\min_{p\in S_K} \sum\limits_{i=1}^K \dist(I_i,\hat{I}_{p(i)})^2,
\end{equation}
where $S_K$ is the set of all permutations of $\{1,2,...,K\}$. 
The relative error between $\cI$ and $\hat{\cI}$  is defined by
\begin{equation}
\dist_r(\cI,\hat{\cI})=\frac{\dist(\cI,\hat{\cI})}{\sqrt{\sum_{i=1}^K \|I_i\|_{\mathrm{F}}^2}}.
\end{equation}
\rev{Note that when computing errors we only consider the disk area with diameter $L$ and ignore the corners.} 
We measure the error between $\hat{\pi}$ and the true distribution $\pi$ by the total variation (TV) distance which takes values in $[0,1]$:
\begin{equation}
\dist_{\mathrm{TV}}(\hat{\pi},\pi)=\frac{1}{2}\sum_{i=1}^K |\hat{\pi}_i-\pi_i|.
\end{equation}
Here we assume that a permutation given by equation \eqref{eq:disperm} has been applied to $\hat{\pi}$.
In what follows, we define the SNR as
\begin{equation}
\mbox{SNR}=\frac{\bE(\mathrm{Signal}^2)}{\bE(\mathrm{Noise}^2)}:=\frac{\sum_{i=1}^K \|I_i\|_{\mathrm{F}}^2}{KL^2\sigma^2}.
\end{equation}

During sPCA, we use the method introduced in~\cite{zhao2016fast} to choose the eigenimages automatically, based on \rev{properties of the Marchenko-Pastur distribution}. Specifically, for each frequency $k$, we take those eigenimages with eigenvalues satisfying 
\begin{equation} \label{eq:mp}
\lambda^{(k)}>1.005\sigma^2(1+\sqrt{\gamma_k}),
\end{equation}
where $\sigma^2$ is the variance of the noise, $\gamma_0=\frac{p_0}{N}$, and $\gamma_k=\frac{p_k}{2N}$ for $k\neq0$. Here,  $p_k$ is the number of eigenimages for frequency $k$ and $N$ is the total number of observations. The \rev{factor} $1.005$ in~\eqref{eq:mp} is chosen heuristically to control the number of sPCA coefficients.

In our first experiment we consider a uniform distribution $\pi$, and assume that the algorithm knows that $\pi$ is uniform. Hence, in the optimization problem, variables $\hat{\pi}_i$ are fixed to be $1/K$. We choose $K=10$, $\mbox{SNR}=1/50$, and take $N=10^4$ observations in total. Examples of noisy observations are shown in the third column of Figure \ref{fig:uniform}. With this much noise, it would be challenging to rotationally align and cluster the observations. 
Nevertheless, the proposed algorithm gets estimates of the images without (even implicitly) doing either alignment or clustering. During sPCA, $83$ coefficients are automatically chosen to represent each image, according to~\eqref{eq:mp}. Hence, in total we have $830$ variables. Figure~\ref{fig:uniform} shows some examples of original images (before and after sPCA), noisy observations and the recovered images by our algorithm. We can see that our algorithm produces accurate recovery of the groundtruth images from noisy samples.
We split the measured error into two terms: the error caused by sPCA (the error between groundtruth images before and after sPCA) and estimation error in the sPCA space caused by the optimization problem (the error between recovered images and groundtruth images after sPCA). 
We refer to these errors as sPCA error and estimation error, respectively. 
 For this experiment, the relative sPCA error compared to the groundtruth images is about $19.6\%$, while the relative estimation error compared to groundtruth images after sPCA is about $ 5.2\%$. Table~\ref{tab:time} shows the CPU time of each step of the algorithm. In this experiment, and all the experiments on uniform distribution in the following, conjugate gradient method is used to solve the optimization problem.

\begin{figure}
\centering
\includegraphics[width=3.2in]{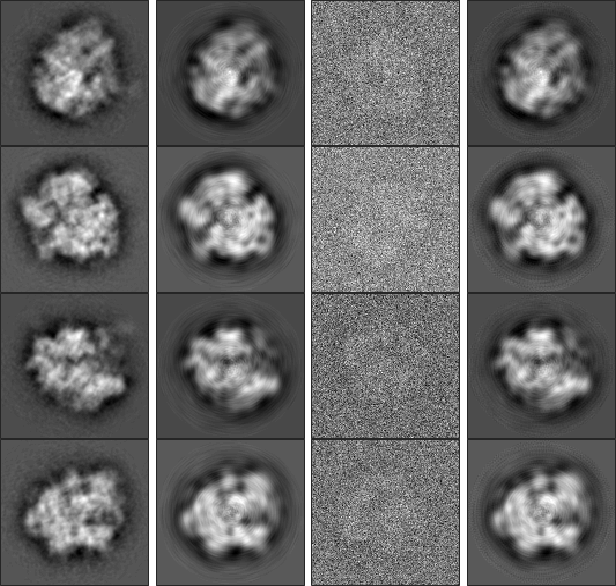}
\caption{\textbf{First column} \rev{(left to right)}: Groundtruth images before sPCA. \textbf{Second column: }Groundtruth images after sPCA.  \textbf{Third column: }examples of noisy observations. \textbf{Fourth column: }recovered images by our algorithm, rotated and permuted to align with groundtruth images.}
\label{fig:uniform}
\end{figure}

\begin{table}
\caption{CPU time cost \rev{(in seconds)} of different steps for experiments on E. coli 70s ribosome with uniform and non-uniform distribution (Corresponding to Figure~\ref{fig:uniform} and~\ref{fig:nonuniform}).}
\label{tab:time}
\vspace{3mm}
\centering
\begin{tabular}{|p{4cm}|cc|}
\hline
\backslashbox{Step}{Distribution} & Uniform  & Non-uniform  \\
\hline
Computing sPCA &  $252.2$s & $259.7$s \\ \hline
Computing mixed invariants &  $16.3$s & $17.5$s \\ \hline
Optimization &  $428.2$s & $7157.3$s \\
\hline
\end{tabular}
\end{table}

We conducted experiments to study how the recovery error increases with the noise level. As before, we set $K=10$, $\pi$ is the (known) uniform distribution, and the number of observations is $10^4$. 
Figure~\ref{fig:noise} shows the recovery results and relative errors for different SNRs.
As can be seen, both the relative sPCA error and estimation error are, more or less, inversely proportional to the SNR. When the noise is larger, the sPCA gives less coefficients and results in larger sPCA errors. In addition, the estimation of the invariants, and thus the coefficients of groundtruth images, are less accurate under larger noise. Of course, when the noise is larger, we need more observations to average out the noise.

\begin{figure*}
\centering
\includegraphics[width=6.5in]{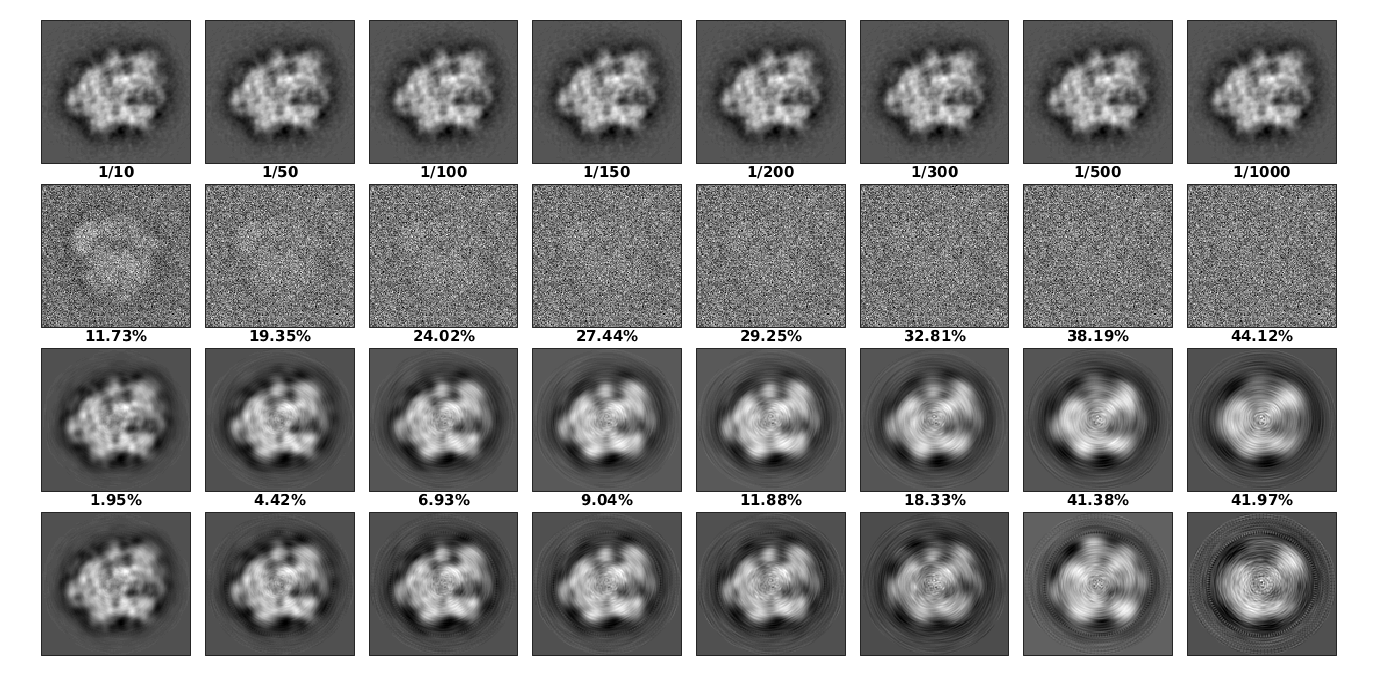}
\caption{Recovery results for different SNR. The first to fourth rows of images are groundtruth images, noisy images, groundtruth images after sPCA, and recovered images, respectively. The first to third rows of numbers are SNR, relative sPCA error compared to groundtruth images (the first row), and relative estimation error compared to groundtruth images after sPCA (the third row), respectively. From left to right, the numbers of coefficients chosen by sPCA are $172$, $88$, $60$, $45$, $41$, $33$, $20$ and $13$.}
\label{fig:noise}
\end{figure*}

The next experiment aims to examine our algorithm when optimizing over $\hat{\pi}$ and the images simultaneously. As before, we fix $K=10$ and SNR$=1/50$. We take $500$ observations for each class for the first $5$ classes, and $1500$ observations for the other $5$ classes, so that
\begin{equation}\label{eq:w}
\pi_i=0.05,\ i=1,...,5;\ \ \pi_i=0.15,\ i=6,...,10.
\end{equation}
After applying our algorithm, the TV distance between $\hpi$ and $\pi$ turns out to be $0.0086$. The relative estimation error of all $10$ images compared to groundtruth images after sPCA is $6.05\%$. The relative estimation error of the $5$ images with $\pi_i=0.05$ is $7.39\%$, while the relative estimation error of the other $5$ images with $\pi_i=0.15$ is $3.94\%$. Empirically, a non-uniform distribution does not influence much the overall quality of the recovery, though it seems underrepresented images suffer more.
Figure~\ref{fig:nonuniform} shows some results of this experiment with non-uniform distribution. CPU time cost is shown in table~\ref{tab:time}. For the experiments with non-uniform distribution, trust-regions algorithm is used to solve the optimization problem. 
Usually, trust-regions algorithm provide more accurate estimates than conjugate gradient at the cost of running times.

\begin{figure}[!t]
\centering
\includegraphics[width=3.2in, height=3.05in]{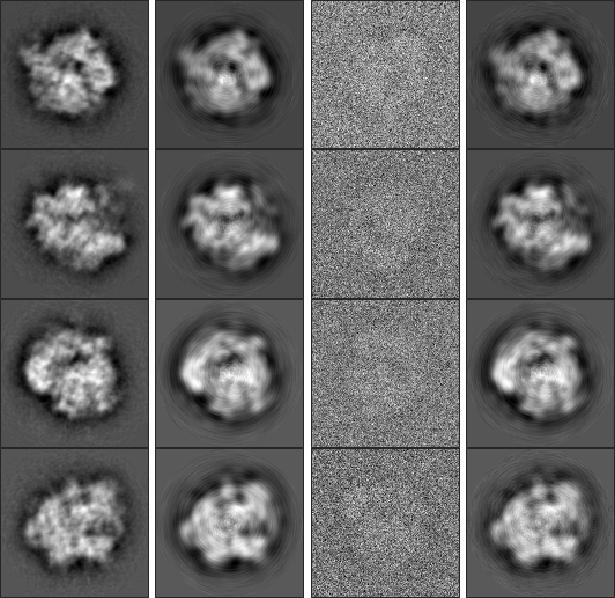}
\caption{\textbf{First column: }Groundtruth images before sPCA. \textbf{Second column: }Groundtruth images after sPCA.  \textbf{Third column: }examples of noisy observations. \textbf{Fourth column: }recovered images by our algorithm, rotated to align with groundtruth images. \textbf{The rows: }The first two rows are images with $\pi=0.15$ and the last two rows are images with $\pi_i=0.05$.}
\label{fig:nonuniform}
\end{figure}

Next, we study the influence of $\pi$ on the quality of the recovery. In this experiment, we take $K=2$, SNR$=1/50$, and $N=10^4$. Table~\ref{tab:different_pi} shows the relative estimation errors on recovered images and the TV error on the recovered distribution as the distribution $(\pi_1, \pi_2)$ shifts away from uniform. From the table, we can see that in all cases our estimated distributions are close to the true distribution with TV distance less than $0.01$. When the number of observations for different images are not equal, the image with more observations tends to have lower estimation error. 

\begin{table}
\caption{Relative estimation error compared to groundtruth images after sPCA and total variation error of experimental results with varying distribution $\pi = (\pi_1, \pi_2)$; $\mathrm{error}_i$ represents the relative error on image $i = 1, 2$.}
\label{tab:different_pi}
\centering
\begin{tabular}{|c|ccccc|}
\hline
$\pi_1$ & 0.1&0.2&0.3&0.4&0.5\\
\hline
$\dist_{\mathrm{TV}}$ & $0.0025$ & $0.0007$ & $0.0022$ & $0.0009$ & $0.0014$ \\
\hline
$\mathrm{error}_1$ & $3.59\%$ & $5.01\%$ & $1.59\%$ & $3.60\%$ & $3.95\%$ \\
\hline
$\mathrm{error}_2$ & $2.01\%$ & $4.66\%$ & $0.91\%$ & $4.08\%$ & $4.63\%$ \\
\hline
\end{tabular}
\end{table}

Usually the projection images are not perfectly centered because the particle picking is not ideal. 
We conducted experiments to examine the robustness of our algorithm to small shifts of the input images, albeit our model does not take these shifts into account. This time, we consider a problem with only one class ($K=1$) of  size $129\times129$, $5\times10^3$ noisy observations and SNR$=1/50$. A random shift is applied to each observation. The shift is generated by a 2-D uniform distribution on all the shifts within a circle of radius $s$. When $s$ ranges from $0$ to $5$, the relative estimation errors of the recovered images are $3.30\%$, $4.86\%$, $6.58\%$, $10.28\%$, $11.64\%$ and $18.30\%$, respectively.
While the error increases with the size of the shifts, it does so at a reasonable pace: when the shifts are small, the recovery errors are too.

Comparison of our algorithm with the EM method was made.
\rev{We implement a vanilla version of the EM algorithm, which is applied on sPCA coefficients rather than the images themselves, and considers only a finite set of in-plane rotations. Our EM is different from the EM-based algorithms implemented in cryo-EM software packages (such as RELION), which are more sophisticated and include many heuristics to improve running time and accuracy. Here we aim to underscore the resolution-computational load trade-off of EM}
In the first experiment, we take $K=5$ classes and $10^3$ observations per class with SNR$=1/50$.
%While the in-plane rotations are distributed uniformly and continuously on the circle, EM algorithm can consider only a finite set of in-plane rotations.
%The EM is applied on sPCA coefficients rather than the images themselves.
Figure~\ref{fig:em} compares the relative estimation error compared to groundtruth images after sPCA for each class and the running time with different number of in-plane rotations considered by EM. From the figure we can see that EM performs better when the number of rotations is large ($\geq32$ in this experiments). However, at the same time EM becomes time-costing, taking nearly $10$ times more CPU-time than our algorithm. In the right plot of Figure~\ref{fig:em}, the time costs of computing invariant features are shown, which are already included in the time cost of our algorithm (the red line). Figure~\ref{fig:em2} shows the results of another experiment for nonuniform distribution and larger noise. We take the distribution~\eqref{eq:w} and SNR$=1/100$, with $K=10$ and $N=10^4$. We can observe similar phenomena as the last experiment.
\begin{figure*}
\centering
\includegraphics[width=3in]{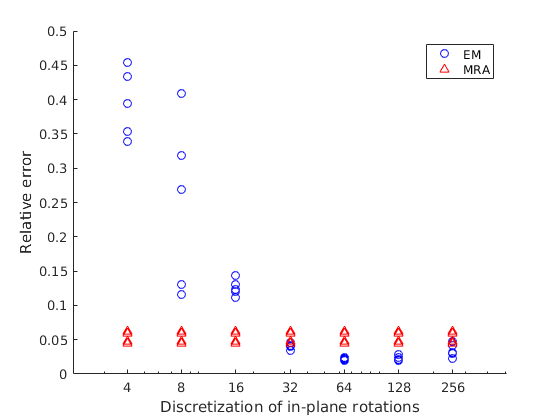}
\includegraphics[width=3in]{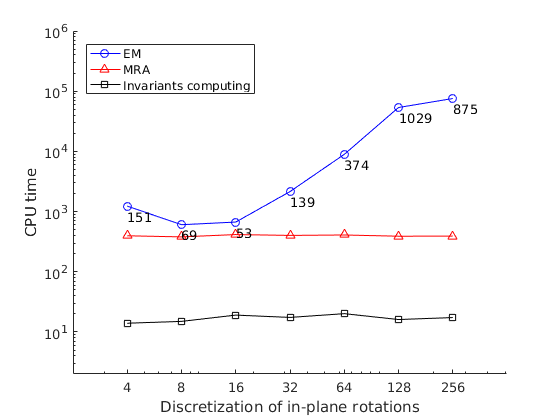}
\caption{Comparison of our algorithm (MRA) with EM in terms of relative estimation error and computation time. Images are generated from $5$ classes with uniformly random in-plane rotations. Contrary to MRA, EM needs to assume the rotations are selected from a discrete set. Here, we see the accuracy/computation time trade-off of EM for SNR$=1/50$. Left panel: for each discretization value (i.e., number of uniformly sampled angles) and algorithm, each point represents the relative estimation error compared to groundtruth images after sPCA for one of the classes. Right panel: integers indicate the number of EM iterations. The CPU time is measured in seconds.}
\label{fig:em}
\end{figure*}

\begin{figure*}
\centering
\includegraphics[width=3in]{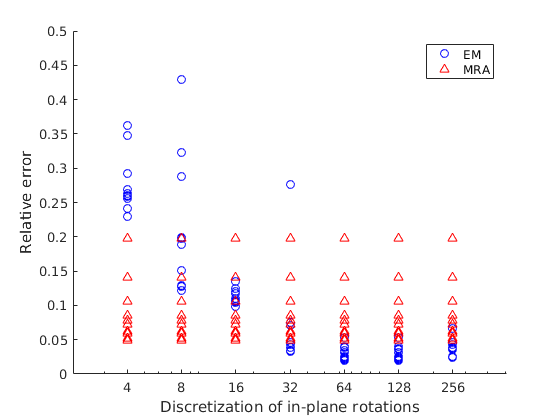}
\includegraphics[width=3in]{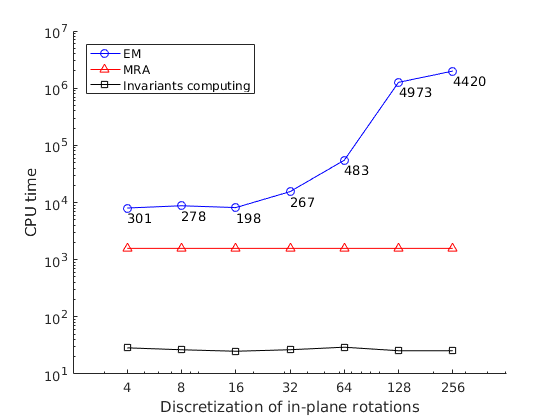}
\caption{Comparison of our algorithm (MRA) with EM in terms of relative estimation error and computation time. The 10 classes are distributed nonuniformly according to~\eqref{eq:w} and SNR$=1/100$. Left panel: for each discretization value and algorithm, each point represents the relative estimation error compared to groundtruth images after sPCA for one of the classes. Right panel: integers indicate the number of EM iterations. The CPU time is measured in seconds.}
\label{fig:em2}
\end{figure*}

To demonstrate that our algorithm applies to other data sets as well, we considered two additional molecules: the transient receptor potential cation channel subfamily V member 1 (TrpV1)~\cite{gao2016trpv1} and the yeast mitochondrial ribosome~\cite{desai2017structure}. The volumes are downloaded from The Electronic Microscopy Data Bank\footnote{\url{www.emdatabank.org}}. Similarly to the experiments before, groundtruth images are projections randomly generated from the volumes. The size of the projections is $181\times181$ and we set $K=10$, SNR = 1/50, and $N = 10^4$. Figure~\ref{fig:mole2} shows some results of TrpV1 data. In this experiment, the distribution is taken to be uniform and the variable $\hpi$ is fixed. 
The relative estimation error compared to groundtruth images after sPCA is $4.29\%$. Figure~\ref{fig:mole3} shows part of the results from the yeast mitochondrial ribosome data. In this experiment, we use the distribution given by~\eqref{eq:w}. The total variation error of $\hpi$ is $0.0141$. The relative estimation error of all the images compared to groundtruth images after sPCA is $6.03\%$, and $3.83\%$ and $7.96\%$ for images with $\pi_i=0.15$ and $\pi_i=0.05$, respectively.

\begin{figure}[t!]
\centering
\includegraphics[width=3.2in]{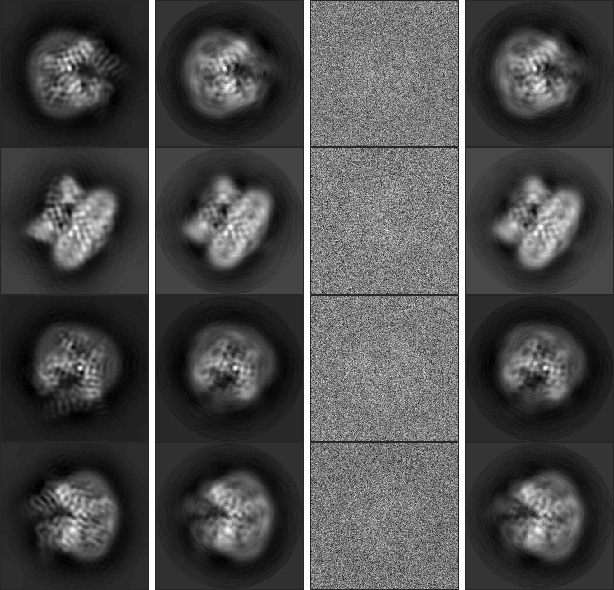}
\caption{Recovery results of the TrpV1 data. \textbf{First column: }Groundtruth images before sPCA. \textbf{Second column: }Groundtruth images after sPCA.  \textbf{Third column: }examples of noisy observations. \textbf{Fourth column: }recovered images by our algorithm, rotated to align with groundtruth images.}
\label{fig:mole2}
\end{figure}

\begin{figure}
\centering
\includegraphics[width=3.2in]{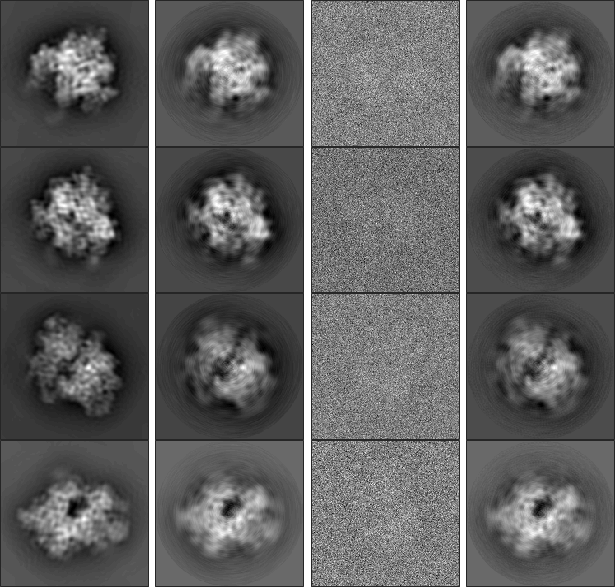}
\caption{Recovery results of the yeast mitochondrial ribosome data. \textbf{First column: }Groundtruth images before sPCA. \textbf{Second column: }Groundtruth images after sPCA.  \textbf{Third column: }examples of noisy observations. \textbf{Fourth column: }recovered images by our algorithm, rotated to align with groundtruth images. \textbf{The rows: }The first two rows are images with $\pi=0.15$. The last two rows are images with $\pi_i=0.05$.}
\label{fig:mole3}
\end{figure}

\section{Conclusion}\label{sec:con}
In this paper, we studied the problem of heterogeneous MRA for 2-D images and proposed a new algorithmic framework for 
  $2$-D classification for SPR. 
  Experimental results show that our algorithm can provide high-quality recovery of the groundtruth images (class averages), even when the noise level is high. The algorithm requires only one pass over the data and thus  suits for large experimental data sets.

In practice, the projection images in cryo-EM suffer from small random shifts. Hence, a more accurate generative model reads
\begin{equation*}
Y=T_sR_{\xi}I_\pi+\varepsilon,
\end{equation*}
where $T_s$ is a small random shift by $s$; compare with~\eqref{eq:model}. 
In future work we intend to extend our framework to take  shifts into account.
A recent study~\cite{abbe2017multireference} shows that non-uniform distributions of translations makes MRA easier. We may take this issue into account in the future. 
Meanwhile, we have shown that our algorithm is robust against small shifts.

More importantly, our algorithm considers a discrete set of viewing directions. Yet, more realistically, cryo-EM micrographs contain projections sampled from a continuous distribution of viewing directions. We hope to extend our algorithm to the continuous case in the future. 
%\rev{Application to experimental data awaits extensions to include the effect of the contrast transfer functions (CTF) and colored noise as well.} 
\rev{To apply the proposed techniques to experimental data, it is necessary to handle effects of the contrast transfer functions (CTF) and of colored noise as well.}

In~\cite{boumal2018heterogeneous,weinthesis}, it was shown that the number of classes that can be demixed in 1-D hMRA is, approximately, $\sqrt{L}$, where $L$ is the length of the signals. Our  experiments indicate that we can demix $40-50$ classes. How this number depends on the size of the image or number of sPCA coefficients is left for future study.

\section*{Acknowledgment}
The research was partially supported by Award Number R01GM090200 from the NIGMS, FA9550-17-1-0291 from AFOSR, Simons Foundation Math+X Investigator Award, and the Moore Foundation Data-Driven Discovery Investigator Award.
NB is partially supported by NSF award DMS-1719558.

\bibliographystyle{ieeetr}
\bibliography{mra_bib}

\appendix
\section{Gradient of the Objective Function}\label{sec:grad}
In this section we give the gradient of our objective function \eqref{eq:obj} in Euclidean space. 
For complex variables $\hat{a}$, we treat them as a matrix, and define the gradient $\partial F/\partial \hat{a}$ as the only matrix $g=g(\hat{a},\hat{\pi})$ satisfying
\begin{equation}
\Re\{\Tr(g^*y)\} = D F(\hat{a},\hat{\pi})[y],
\end{equation}
where $y$ is a matrix with the same size as $\hat{a}$, $\Re\{\Tr(g^*y)\}$ is an inner product, and $D F(\hat{a},\hat{\pi})[y]$ is the directional derivative of $F$ at $\hat{a}$ along $y$.
As the objective function is a summation of least squares, the gradient of the objective function is the summation of the gradient of least squares terms. Hence, we only need to compute gradients for the following $3$ groups of terms:
\begin{align}
&\left|\sum\limits_{i=1}^K \hpi_i \ha^i_{0,q} - \hm_{0,q}\right|^2, \label{eq:ls1}\\
&\left|\sum\limits_{i=1}^K \hpi_i \ha^i_{k,q_1}\overline{\ha^i_{k,q_2}} - \hp_{k,q_1,q_2}\right|^2, \label{eq:ls2}\\
&\left|\sum\limits_{i=1}^K \hpi_i \ha^i_{k_1,q_2}\ha^i_{k_2,q_2}\overline{\ha^i_{k_1+k_2,q_3}} - \hb_{k_1,k_2,q_1,q_2,q_3}\right|^2.\label{eq:ls3}
\end{align}
We call \eqref{eq:ls1}, \eqref{eq:ls2} and \eqref{eq:ls3} the first-, second- and third-order terms according to the order of moments they contain. 

\subsection{First-order terms}
Let 
\begin{equation*}
\cM_q(\ha,\hpi)=\sum\limits_{i=1}^K \hpi_i \ha^i_{0,q} - \hm_{0,q}.
\end{equation*}
Then, for real variables $\hpi_i$ we can easily get
\begin{equation}
\frac{\partial |\cM_q(\ha,\hpi)|^2}{\partial \hpi_i}=2\Re\left\{\cM_q(\ha,\hpi)\ha^i_{0,q}\right\}.
\end{equation}
For complex variables $\ha^i_{0,q}$, we have
\begin{equation}
\frac{\partial |\cM_q(\ha,\hpi)|^2}{\partial \ha^i_{0,q}}=2\cM_q(\ha,\hpi)\hpi_i.
\end{equation}
For $\ha^i_{k',q'}$ with $k'\neq0$ or $q'\neq q$, we always have
\begin{equation}
\frac{\partial |\cM_q(\ha,\hpi)|^2}{\partial \ha^i_{k',q'}}=0,
\end{equation}
as they do not appear in the term. In the following subsections, we ignore the gradient with respect to such variables.

\subsection{Second-order terms}
Let
\begin{equation*}
\cP(\ha,\hpi):=\cP_{k,q_1,q_2}(\ha,\hpi)=\sum\limits_{i=1}^K \hpi_i \ha^i_{k,q_1}\overline{\ha^i_{k,q_2}} - \hp_{k,q_1,q_2}.
\end{equation*}
Then, for $\hpi_i$, similar to the last subsection, we have
\begin{equation}
\frac{\partial |\cP(\ha,\hpi)|^2}{\partial \hpi_i}=2\Re\left\{\cP(\ha,\hpi)\ha^i_{k,q_1}\overline{\ha^i_{k,q_2}}\right\}.
\end{equation}
For $\ha^i_{k,q}$, there are two cases. If $q_1=q_2$, then
\begin{equation}
\frac{\partial |\cP(\ha,\hpi)|^2}{\partial \ha^i_{k,q_1}}=4\Re\left\{\cP(\ha,\hpi)\right\}\ha^i_{k,q_1}\hpi_i.
\end{equation}
If $q_1\neq q_2$, then we have
\begin{equation}
\frac{\partial |\cP(\ha,\hpi)|^2}{\partial \ha^i_{k,q_1}}=2\cP(\ha,\hpi)\ha^i_{k,q_2}\hpi_i,
\end{equation}
and
\begin{equation}
\frac{\partial |\cP(\ha,\hpi)|^2}{\partial \ha^i_{k,q_2}}=2\overline{\cP(\ha,\hpi)}\ha^i_{k,q_1}\hpi_i,
\end{equation}

\subsection{Third-order terms}
Let 
\begin{eqnarray*}
\cB(\ha,\hpi)&:=&\cB_{k_1,k_2,q_1,q_2,q_3}(\ha,\hpi)\\
         &=&\sum\limits_{i=1}^K \hpi_i \ha^i_{k_1,q_2}\ha^i_{k_2,q_2}\overline{\ha^i_{k_1+k_2,q_3}} - \hb_{k_1,k_2,q_1,q_2,q_3}.
\end{eqnarray*}
Then, firstly we have
\begin{equation}
\frac{\partial |\cB(\ha,\hpi)|^2}{\partial \hpi_i}=2\Re\left\{\cB(\ha,\hpi)\ha^i_{k_1,q_2}\ha^i_{k_2,q_2}\overline{\ha^i_{k_1+k_2,q_3}}\right\}.
\end{equation}
Next, again we consider two cases. If $k_1=k_2$ and $q_1=q_2$, then
\begin{eqnarray}
\frac{\partial |\cB(\ha,\hpi)|^2}{\partial \ha^i_{k_1,q_1}}&=&4\cB(\ha,\hpi)\overline{\ha^i_{k_1,q_1}}\ha^i_{k_1+k_2,q_3}\hpi_i, \\
\frac{\partial |\cB(\ha,\hpi)|^2}{\partial \ha^i_{k_1+k_2,q_3}}&=&2\overline{\cB(\ha,\hpi)}(\ha^i_{k_1,q_1})^2\hpi_i.
\end{eqnarray}
Otherwise, we have
\begin{eqnarray}
\frac{\partial |\cB(\ha,\hpi)|^2}{\partial \ha^i_{k_1,q_1}}&=&2\cB(\ha,\hpi)\overline{\ha^i_{k_2,q_2}}\ha^i_{k_1+k_2,q_3}\hpi_i, \\
\frac{\partial |\cB(\ha,\hpi)|^2}{\partial \ha^i_{k_2,q_2}}&=&2\cB(\ha,\hpi)\overline{\ha^i_{k_1,q_1}}\ha^i_{k_1+k_2,q_3}\hpi_i, \\
\frac{\partial |\cB(\ha,\hpi)|^2}{\partial \ha^i_{k_1+k_2,q_3}}&=&2\overline{\cB(\ha,\hpi)}\ha^i_{k_1,q_1}\ha^i_{k_2,q_2}\hpi_i.
\end{eqnarray}

\end{document}